\newtheorem{theorem}{Theorem}[section]
\newtheorem{proposition}[theorem]{Proposition}
\newtheorem{lemma}[theorem]{Lemma}
\newtheorem{corollary}[theorem]{Corollary}
\newtheorem{definition}[theorem]{Definition}
\newcommand{\E}{\mathbb{E}}
\newcommand{\R}{\mathbb{R}}
\newcommand{\sS}{\mathcal{S}}
\newcommand{\sA}{\mathcal{A}}
\newcommand{\sZ}{\mathcal{Z}}
\newcommand{\sT}{\mathcal{T}}
\newcommand{\diff}{\text{d}}
\title{\LARGE \bf
Convergent Q-Learning for Infinite-Horizon General-Sum Markov Games through Behavioral Economics
}
\author{Yizhou Zhang and Eric Mazumdar
\thanks{Department of Computing and Mathematical Sciences, California Institute of Technology}
}
\begin{document}

\maketitle
\thispagestyle{empty}
\pagestyle{empty}

\begin{abstract}

Risk-aversion and bounded rationality are two key characteristics of human decision-making. Risk-averse quantal-response equilibrium (RQE) is a solution concept that incorporates these features, providing a more realistic depiction of human decision making in various strategic environments compared to a Nash equilibrium. Furthermore a class of RQE has recently been shown in~\cite{mazumdar2024tractableequilibriumcomputationmarkov} to be universally computationally tractable in all finite-horizon Markov games, allowing for the development of multi-agent reinforcement learning algorithms with convergence guarantees.  In this paper, we expand upon the study of RQE and analyze their computation in both two-player normal form games and discounted \emph{infinite}-horizon Markov games. For normal form games we adopt a monotonicity-based approach allowing us to generalize previous results. We first show uniqueness and Lipschitz continuity of RQE with respect to player's payoff matrices under monotonicity assumptions, and then provide conditions on the players' degrees of risk aversion and bounded rationality that ensure monotonicity. We then focus on discounted infinite-horizon Markov games. We define the risk-averse quantal-response Bellman operator and prove its contraction under further conditions on the players' risk-aversion, bounded rationality, and temporal discounting. This yields a Q-learning based algorithm with convergence guarantees for all infinite-horizon general-sum Markov games.

\end{abstract}

\section{INTRODUCTION}
Strategic interactions between machine learning algorithms and other agents---each pursuing its own objective in uncertain, complex environments---arise in numerous real-world applications, including autonomous driving, smart grid and financial trading. For humans confronted with such scenarios, two key features typically characterize their decision-making: risk-aversion and bounded rationality. Risk aversion helps agents be robust against the uncertainty of the environment and the varying behavior of other agents, while bounded rationality reflects their limited information processing ability in such complex contexts.

When these problems of strategic decision making are cast as \emph{games}, numerous works highlight how the classical notion of a Nash equilibrium fails to capture human decision-making (see e.g., ~\cite{Experiment3,mckelvey1995quantal,goeree2003risk,prediction1}). One view is that this is due to the fact that Nash captures neither risk-aversion---since it results from agents maximizing their expected payoff---nor bounded rationality---since agents are assumed to be perfectly rational. Moreover, finding Nash equilibria can be intractable from both a computational perspective \cite{daskalakis2013complexity} or a learning perspective \cite{mertikopoulos2017cyclesadversarialregularizedlearning, mazumdar2020gradient}. 

Motivated by behavioral economics \cite{mckelvey1995quantal}, a recent work \cite{mazumdar2024tractableequilibriumcomputationmarkov} proposed the notion of a risk-averse quantal-response equilibrium, a new solution concept that incorporates risk-aversion and bounded rationality through regularization in normal form and finite-horizon Markov games. They further proved that under certain assumptions on players' degrees of risk aversion and bounded rationality, a class of RQEs can be efficiently computed regardless of the payoff structure of the game, implying that RQEs are also universally computationally tractable across finite horizon Markov games. 

In this paper we generalize the study of the properties of RQEs in two-player normal form games from a monotonicity \cite{rosen1965existence} point of view, and further extend it to discounted infinite-horizon Markov games. In particular, for normal form games we show that under strict monotonicity the RQE is unique, and furthermore that under strong monotonicity, the equilibrium strategies are Lipschitz continuous with respect to the payoff matrices of the game. We then provide the conditions for the game to be monotone in terms of the second-order gradients of the regularizers, and similar to that shown in \cite{mazumdar2024tractableequilibriumcomputationmarkov}, these conditions do not depend on the payoff matrices of the game. This also suggests that RQEs are tractable using existing algorithms designed for computing Nash equilibrium in monotone games \cite{cai2023doubly, golowich2020tight}.

Based on our results in the normal form game, we extend the concept of RQEs to discounted infinite-horizon Markov games by viewing the $Q$ functions in the stage game as payoff matrices a normal form game. We define the risk-averse quantal-response Bellman operator whose fixed point is an RQE of the Markov game, and prove the contraction of this operator under the same condition for strong monotonicity as in normal form games, as long as the discount factor of the Markov game does not exceed certain value that depends on the level of strong monotonicity. Combining contraction with the fact that strong monotonicity holds regardless of $Q$ functions, we provide a $Q$-learning based algorithm using the risk-averse quantal-response Bellman operator that provably converges to an RQE for all discounted two player infinite-horizon general-sum Markov games. This is in sharp contrast to the intractability of even CCEs in discounted general-sum Markov games \cite{daskalakis2022complexitymarkovequilibriumstochastic}.

\section{PRELIMINARIES}
We consider a two-player general-sum bimatrix game with finite pure strategies, where the pure strategies of each player $i\in\{1, 2\}$ is chosen from a set $\sA_i$ indexed by $\{1,2,\dots,|\sA_i|\}$ and the utility of player $i$ for actions $(a_1,a_2)\in \sA_1\times \sA_2$ is given by $R_i(a_1,a_2)$. For a pair of mixed strategies (policies) $\pi=(\pi_1,\pi_2)\in \Delta_{|\sA_1|}\times \Delta_{|\sA_2|}$, the expected utility $U_i$ of player $i$ is given by:
\begin{equation}
    U_i(\pi_1,\pi_2)=\E_{(a_1,a_2)\sim(\pi_1,\pi_2)}[R_i(a_1,a_2)].
\end{equation}
Utilizing the linearity of expectation, we slightly overload the notation $R_i\in \R^{|\sA_i|\times |\sA_{-i}|}$, where $-i$ denotes the player other than player $i$, to represent the payoff matrix defined by $[R_i]_{mn}=R_i(a_i=m,a_{-i}=n)$ and now the utility functions have a simple bilinear form:
\begin{equation}\label{eq:bilinear_utility_function}
    U_i(\pi_i,\pi_{-i};R_i)=\pi_i^T R_i\pi_{-i}; \quad i=1, 2.
\end{equation}

\subsection{Risk-Aversion and Bounded Rationality}
While the bilinear utility functions \eqref{eq:bilinear_utility_function} is an effective and well-studied objective in games, they don't precisely reflect people's behavior in real-world decision making, since people are risk-averse against uncertain outcome introduced by mixed strategies, and are not perfectly rational when making decisions.
Recent work \cite{mazumdar2024tractableequilibriumcomputationmarkov} took risk-aversion and bounded rationality into consideration under the setting of matrix games, and introduced a new objective function for the players. They showed that when the players are using convex risk measures \cite{follmer2002convex} and their policies are constrained to quantal responses, the objective function for player $i$ becomes:
\begin{align}
    &f_i(\pi_i,\pi_{-i};R_i)\nonumber\\
    =&\sup_{p_i\in\Delta_{|\sA_{-i}|}} -\pi_i^TR_ip_i
    -D_i(p_i,\pi_{-i})+\epsilon_i\nu_i(\pi_i)\label{eq:objective_function} ,
\end{align}
where each player seeks to minimize its objective function $f_i(\pi_i,\pi_{-i})$. In \eqref{eq:objective_function}, the first term corresponds to the original bilinear form of player $i$'s expected utility, but taken to be the worst case as if the opponent is using some adversarial policy $p_i$ close to $\pi_{-i}$. The second term $D_i(p_i,\pi_{-i})$ corresponds to the  risk-aversion against the uncertainty of the opponent's action sampled from a stochastic $\pi_{-i}$, as established by the dual representation theorem proposed in \cite{follmer2002convex}. The dual representation theorem also suggests that $D_i(p_i,\pi_{-i})$ can be taken to be convex in $p_i$ for a fixed $\pi_{-i}$. Intuitively, the inner supremum over $p_i$ indicates that player $i$ is trying to be robust to the possible deviation to $p_i$ of player $-i$ from its policy $\pi_{-i}$ regularized by some penalty function $D_i$ representing a notion of distance. The last term $\epsilon_i \nu_i(\pi_i)$, for strictly convex $\nu_i$ constrains the player's strategy to quantal responses \cite{follmer2002convex, mertikopoulos2016learning, sokota2023unifiedapproachreinforcementlearning}, reflecting the bounded rationality of each player.

The solution concept of the game defined by \eqref{eq:objective_function} is the risk-adjusted quantal response equilibrium (RQE), formally defined as follows:
\begin{definition}[\cite{mazumdar2024tractableequilibriumcomputationmarkov}, Definition 5]\label{def:RQE_normal_form}
    A risk-adjusted quantal response equilibrium (RQE) of a two-player general-sum bimatrix game whose payoff matrix is given by $\mathbf{R}=(R_1,R_2)$ is a pair of mixed strategies $\pi^*=(\pi_1^*,\pi_2^*)\in \Delta_{|\sA_1|}\times \Delta_{|\sA_2|}$ such that
    \begin{equation}
        f_i(\pi_i^*,\pi_{-i}^*;\mathbf{R})\leq f_i(\pi_i,\pi_{-i}^*;\mathbf{R}), \forall \pi_i\in\Delta_{|\sA_i|}
    \end{equation}
    for both $i\in\{1, 2\}$. When the RQE is unique, we use $\texttt{RQE}_i$ to denote the value of player $i$ at this equilibrium:
    \begin{equation}
        \texttt{RQE}_i(\mathbf{R}):=f_i(\pi_i^*,\pi_{-i}^*;\mathbf{R}),
    \end{equation}
    again for both $i\in\{1,2\}$.
\end{definition}

In order to process the supremum over $p_i$ in equation \eqref{eq:objective_function}, we introduce two additional adversaries controlling $p_i$ for $i=1,2$. Now two original players and two introduced adversary players form a 4-player game, where each original player $i$ controls $\pi_i$ to minimize the following objective function:
\begin{equation}\label{eq:4playergame_original_player_objective}
    J_i(\pi_i,\pi_{-i},p;\mathbf{R})=-\pi_i^TR_ip_i-D_i(p_i,\pi_{-i})+\epsilon_i\nu_i(\pi_i),
\end{equation}
and each adversary $i$ controls $p_i$ to minimize the following:
\begin{equation}\label{eq:4playergame_adversary_objective}
    \bar{J}_i(\pi,p_i,p_{-i};\mathbf{R})=\pi_i^TR_ip_i+D_i(p_i,\pi_{-i})-\epsilon_i\nu_i(\pi_i)
\end{equation}
we use $z=(\pi,p)$ to denote the joint strategy of the 4-player game.
\cite{mazumdar2024tractableequilibriumcomputationmarkov} showed that each Nash equilibrium $z^*=(\pi^*,p^*)$ in the 4-player game has a corresponding RQE $\pi^*$ in the original two-player game. Therefore when the RQE is unique, we further have $\texttt{RQE}_i(\mathbf{R})=J_i(\pi^*,p^*;\mathbf{R})$, which links the RQE value to the modified 4-player game.

\subsection{Discounted Infinite-Horizon Markov Games}
A discounted two-player infinite-horizon general-sum Markov game is specified by a tuple $\mathcal{MG}=\{\sS, \{\sA_i\}_{i=1,2},\{r_i\}_{i=1,2}, \gamma, P \}$ where $\sS$ is the state space of the underlying MDP, $\sA_i$ is the action space of player $i\in\{1, 2\}$, and we use the notation $\sA=\sA_1\times \sA_2$ to denote the product action space of both players. We assume both $|\sS|$ and $|\sA_1|, |\sA_2|$ to be finite. $r_i:\sS\times \sA\rightarrow [0, 1]$ is the reward function of player $i$, which we assume to be deterministic. We use $\mathbf{r}$ to denote the paired reward function $\mathbf{r}:=(r_1,r_2)$. $\gamma\in[0,1)$ is the discount factor and $P:\sS\times \sA\rightarrow \Delta_\sS$ is the transition kernel, where $P(s'|s,\mathbf{a})$ is the probability of the next state being $s'$ given the current state $s$ and the current actions $\mathbf{a}=(a_1,a_2)$ of the players.

We focus on the Markov policies, the class of policies where the action selection probability only depends on the current state instead of the entire gameplay trajectory, i.e. $\pi=(\pi_1,\pi_2)$ where $\pi_i:\sS\rightarrow \Delta_{|\sA_i|}, i\in\{1,2\}$.
Given a product Markov policy $\pi$, without considering risk-aversion and bounded rationality, player $i$ has an expected discounted cumulative reward given by $\E_\pi[\sum_{t=0}^\infty \gamma^t r_i(s_t,a_t)]$.

To incorporate risk-aversion in discounted infinite-horizon Markov games, we consider two sources of randomness, one against the uncertainty of the other player's strategy, and the other against the uncertainty in the state transition. Given a product policy $\pi$, we recursively define the value function $V_i^\pi:\sS\rightarrow \R$ and $Q$ function $Q_i^\pi:\sS\times \sA\rightarrow \R$ of player $i$ as:
\begin{align*}
    &V_i^\pi(s)=-\sup_{p_i\in\Delta_{|\sA_{-i}|}}-\pi_i^T(s)Q_i^\pi(s,\cdot)p_i-D_i\left(p_i,\pi_{-i}(s)\right);\\
    &Q_i^\pi(s,\mathbf{a})=r_i(s,\mathbf{a})\\
    &\quad +\gamma \inf_{\widetilde{P}\in \Delta_\sS}\left\{\E_{s'\sim \widetilde{P}}V_i^\pi(s')+D_i^{\text{env}}\left(\widetilde{P},P(\cdot|s,\mathbf{a})\right)\right\}.
\end{align*}
Here in the definition of $Q_i^\pi$ we follow Theorem 2 in \cite{zhang2024softrobustmdpsrisksensitive} to convert risk-aversion against the environment to taking an infimum over $\widetilde{P}\in\Delta_\sS$.

Similar to what we did in normal form games, we assume bounded rationality by constraining the policies to quantal responses by adding a regularization term $\epsilon_i\nu_i(\pi_i)$, then the value function becomes:
\begin{equation}\label{eq:V_epsilon_def}
    V_i^{\epsilon_i,\pi}(s)=-f_i(\pi(\cdot|s);Q_i^{\epsilon_i,\pi}(s,\cdot))
\end{equation}
for $f_i$ defined in \eqref{eq:objective_function}, and the $Q$ function accordingly:
\begin{align}
    &Q_i^{\epsilon_i,\pi}(s,\mathbf{a})=r_i(s,\mathbf{a})\nonumber\\
    &+ \gamma \inf_{\widetilde{P}\in \Delta_\sS}\left\{\E_{s'\sim \widetilde{P}}V_i^{\epsilon_i,\pi}(s')+D_i^{\text{env}}\left(\widetilde{P},P(\cdot|s,\mathbf{a})\right)\right\}.\label{eq:Q_epsilon_def}
\end{align}

Now we can generalize the definition of RQE to Markov games through the defined value functions as follows:
\begin{definition}(Stationary Markov RQE)
    A pair of Markov policies $\pi^*=(\pi_1^*,\pi_2^*)$ is said to be an RQE of a two-player Markov game $\mathcal{MG}$ if for both $i\in\{1,2\}$ and for all states $s\in \sS$,
    \begin{equation}
        V_i^{\epsilon_i,\pi^*}(s)\geq V_i^{\epsilon_i,\pi_i,\pi_{-i}^*}(s), \forall \pi_i: \sS\rightarrow \Delta_{|\sA_i|}.
    \end{equation}
\end{definition}
It is worth noticing that unlike the one-step bounded rationality considered in \cite{mazumdar2024tractableequilibriumcomputationmarkov}, we add the regularization term $\epsilon_i\nu_i(\pi_i)$ at every time step, which is also widely adopted in sequential decision making \cite{Ortega_2013, evans2023boundedrationalityrelaxingbest}.

\section{PROPERTIES OF RQE FROM A VIEW OF MONOTONICITY}
In this section we focus on normal form games. We show the uniqueness and Lipschitz continuity of RQE with respect to its payoff matrix under certain conditions. We first obtain our results through monotonicity of the modified 4-player game and then discuss the conditions for monotonicity to hold. We first give the definition of a monotone game:
\begin{definition}
    Consider a $N$-player game where player $i$ chooses action $z_i$ from a compact and convex action space $\sZ_i$ to minimize an objective function $J_i(z_i,z_{-i})$. Let $\sZ=\prod_{i=1}^N \sZ_i$ denote the product action space, the gradient operator $F:\sZ\rightarrow \R^N$ is defined as $F=(F_1,F_2,\dots, F_N)^T$,
    where $F_i:\sZ_i\rightarrow \R$ is the gradient of $J_i$ with respect to $z_i$ given by $F_i(z)=\nabla_{z_i} J_i(z_i,z_{-i})$. A game is monotone if its gradient operator $F$ is monotone, i.e.
    \begin{align*}
        (z-z')^T\left(F(z)-F(z')\right)\geq 0, \forall z,z'\in \sZ.
    \end{align*}
    The game is strictly monotone when the inequality above is strict,
    and is $\alpha$-strongly monotone if $F$ is $\alpha$-strongly monotone:
    \begin{align*}
        (z-z')^T\left(F(z)-F(z')\right)\geq \alpha\|z-z'\|_2^2, \forall z,z'\in \sZ.
    \end{align*}
\end{definition}

\subsection{Uniqueness and Lipschitz Continuity of RQE}
If we consider the modified 4-player game where $z=(\pi_1,\pi_2,p_1,p_2)$, the objective function is given by:
\begin{align}
    F(z;\mathbf{R})&=\begin{bmatrix}
        \nabla_{\pi_1}J_1(z;\mathbf{R})\\
        \nabla_{\pi_2}J_2(z;\mathbf{R})\\
        \nabla_{p_1}\bar{J}_1(z;\mathbf{R})\\
        \nabla_{p_2}\bar{J}_2(z;\mathbf{R})
    \end{bmatrix}
    =\begin{bmatrix}
        -R_1p_1+\epsilon_1\nabla\nu_1(\pi_1)\\
        -R_2p_2+\epsilon_2\nabla\nu_2(\pi_2)\\
        R_1^T\pi_1+\nabla_{p_1}D_1(p_1,\pi_2)\\
        R_2^T\pi_2+\nabla_{p_2}D_2(p_2,\pi_1)
    \end{bmatrix}.\label{eq:Fz_definition}
\end{align}
Given the 4-player game is strictly monotone, we have the following:
\begin{proposition}\label{prop:normal_form_RQE_uniqueness}
    If the modified 4-player game is strictly monotone, or equivalently $F(z)$ is a strictly monotone operator, the RQE of the original 2-player game is unique.
\end{proposition}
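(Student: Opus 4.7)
The plan is to leverage the classical Rosen-style variational inequality argument on the modified 4-player game and then transfer uniqueness of its Nash equilibrium to uniqueness of the RQE of the original 2-player game, using the correspondence between the two solution concepts introduced after \eqref{eq:4playergame_adversary_objective}.

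First I would verify that the 4-player game is a convex game: each $J_i$ is convex in $\pi_i$ (linear in $\pi_i$ plus the convex regularizer $\epsilon_i\nu_i$) and each $\bar{J}_i$ is convex in $p_i$ (linear in $p_i$ plus the convex $D_i(\cdot,\pi_{-i})$). Hence Nash equilibria of the 4-player game are exactly the solutions of the variational inequality $F(z^*)^T(z-z^*)\ge 0$ for all $z\in\sZ$. Suppose for contradiction there exist two distinct Nash equilibria $z^*$ and $z^{**}$. Instantiating this VI at $z^*$ with $z=z^{**}$ and at $z^{**}$ with $z=z^*$ and adding yields
\begin{equation*}
(z^*-z^{**})^T\bigl(F(z^*)-F(z^{**})\bigr)\le 0,
\end{equation*}
which contradicts strict monotonicity of $F$ whenever $z^*\ne z^{**}$. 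So the Nash equilibrium of the 4-player game is unique.

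Next I would show that every RQE $\pi^*$ of the 2-player game lifts to a Nash equilibrium of the 4-player game, so that uniqueness of the latter forces uniqueness of the former. Given an RQE $\pi^*$, define $p_i^*=\argmax_{p_i\in\Delta_{|\sA_{-i}|}}\bigl[-\pi_i^{*T}R_ip_i-D_i(p_i,\pi_{-i}^*)\bigr]$, which exists and is unique by strict convexity of $D_i(\cdot,\pi_{-i}^*)$. By Danskin's theorem, $\nabla_{\pi_i}f_i(\pi_i^*,\pi_{-i}^*)=-R_ip_i^*+\epsilon_i\nabla\nu_i(\pi_i^*)=\nabla_{\pi_i}J_i(\pi^*,p^*)$, so the first-order optimality conditions characterizing $\pi_i^*$ as a minimizer of $f_i(\cdot,\pi_{-i}^*)$ coincide with those characterizing $\pi_i^*$ as a best response to $(\pi_{-i}^*,p^*)$ in the 4-player game; convexity of $J_i$ in $\pi_i$ promotes these first-order conditions to a genuine best-response property. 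The condition for $p_i^*$ holds by construction, so $(\pi^*,p^*)$ is a Nash equilibrium of the 4-player game, and two distinct RQEs would produce two distinct Nash equilibria.

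I expect the main obstacle to be this lifting step: the RQE definition only requires $\pi_i^*$ to minimize the envelope $\sup_{p_i}J_i(\pi_i,\pi_{-i}^*,p_i)$ rather than $J_i(\cdot,\pi_{-i}^*,p^*)$ with $p^*$ held fixed, and the Danskin/envelope argument at the unique inner maximizer $p_i^*$ is precisely what bridges this gap. This is why strict convexity of $D_i$ in its first argument, which delivers uniqueness of $p_i^*$ and hence differentiability of the envelope, is essential.
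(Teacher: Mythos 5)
Your proof is correct and follows essentially the same route as the paper: both establish uniqueness of the Nash equilibrium of the modified 4-player game via the variational-inequality characterization $(z-z^*)^T F(z^*)\geq 0$ combined with strict monotonicity of $F$, and then transfer uniqueness to the RQE through the correspondence between the two solution concepts. The only difference is in the details---the paper derives the variational inequality from explicit KKT conditions with Lagrange multipliers and cites Proposition 1 of \cite{mazumdar2024tractableequilibriumcomputationmarkov} for the lifting of an RQE to a Nash equilibrium of the 4-player game, whereas you obtain the variational inequality directly from convexity and supply the lifting yourself via a Danskin/envelope argument---and both are sound.
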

\begin{proof}
    Since the 4-player game is strictly monotone, following the same proof of Theorem 2 in \cite{rosen1965existence} we obtain uniqueness of its Nash equilibrium, and by Proposition 1 in \cite{mazumdar2024tractableequilibriumcomputationmarkov}, each RQE in the original two-player game can be mapped to a different Nash equilibrium in the modified 4-player game. The uniqueness of the RQE in the original 2-player game follows immediately. For completeness we include the proof details as follows: We consider the KKT conditions for the objective functions \eqref{eq:4playergame_original_player_objective} and \eqref{eq:4playergame_adversary_objective} over the simplex of each player given $\mathbf{R}$, which must be satisfied at equilibrium point $z^*$. For the original player we have:
    \begin{equation}\label{eq:KKT_cond_originalplayer}\begin{aligned}
        -R_ip_i^*+\epsilon_i\nabla\nu_i(\pi_i^*)-\lambda(\pi_i)+\mu(\pi_i)1=0;\\
        \pi_i^*\in \Delta_{|\sA_i|}; \lambda(\pi_i)\geq 0; \mu(\pi_i)\in\R;\lambda(\pi_i)^T\pi_i^*=0.
    \end{aligned}\end{equation}
    where $i\in\{1, 2\}$, $\lambda(\pi_i)$ and $\mu(\pi_i)$ are Lagrange multipliers with respect to the simplex constraint, and $\lambda(\pi_i)^T\pi_i^*=0$ denotes complimentary slackness. For the adversaries we have:
    \begin{equation}\label{eq:KKT_cond_adversary}\begin{aligned}
            R_i\pi_i^* +\nabla_pD_i(p_i^*,\pi_{-i}^*)-\lambda(p_i)+\mu(p_i)1=0;\\
            p_i^*\in\Delta_{|\sA_{-i}|}; \lambda(p_i)\geq 0; \mu(p_i)\in \R; \lambda(p_i)^Tp_i^*=0.
    \end{aligned}\end{equation}
    where similarly $i\in\{1,2\}$, $\lambda(p_i), \mu(p_i)$ are Lagrange multipliers and $\lambda(p_i)^Tp_i^*=0$ denotes complimentary slackness. We can combine \eqref{eq:KKT_cond_originalplayer} and \eqref{eq:KKT_cond_adversary} in a more compact form:
    \begin{equation}
        F(z^*; \mathbf{R})=\begin{bmatrix}
        \lambda(\pi_1)-\mu(\pi_1) 1\\
        \lambda(\pi_2)-\mu(\pi_2) 1\\
        \lambda(p_1)-\mu(p_1)1\\
        \lambda(p_2)-\mu(p_2)1
    \end{bmatrix}.
    \end{equation}
    Therefore, for arbitrary $z\in \sZ$ we have:
    \begin{align}
        &(z^*-z)^TF(z^*; \mathbf{R})\nonumber\\
        =&(z^*-z)^T\begin{bmatrix}
        \lambda(\pi_1)-\mu(\pi_1) 1\\
        \lambda(\pi_2)-\mu(\pi_2) 1\\
        \lambda(p_1)-\mu(p_1)1\\
        \lambda(p_2)-\mu(p_2)1
    \end{bmatrix}
    =-z^T\begin{bmatrix}
        \lambda(\pi_1)\\
        \lambda(\pi_2)\\
        \lambda(p_1)\\
        \lambda(p_2)
    \end{bmatrix}\leq 0 \label{eq:optimality_equilibrium_point}
    \end{align}
    where in the second equality we have used the fact that $z,z^*\in \sZ$ such that $(z^*-z)^T1=0$ and complementary slackness, and the last inequality holds because $z\in \sZ$ and $\lambda(\pi_i),\lambda(p_i)\geq 0$.

    Suppose $z_1$ and $z_2$ are two Nash equilibria of the 4-player game with respect to $\mathbf{R}$, we must have $(z_1-z_2)^TF(z_1; \mathbf{R})\leq 0$ and $(z_2-z_1)^TF(z_2;\mathbf{R})\leq 0$. Adding these two inequalities up we have $(z_1-z_2)^T\left(F(z_1;\mathbf{R})-F(z_2;\mathbf{R})\right)\leq 0$, combining strict monotonicity of $F(\cdot;\mathbf{R})$ we have $z_1=z_2$, indicating the uniqueness of the Nash equilibrium.
\end{proof}

Having established the uniqueness of Nash equilibrium of the modified 4-player game and RQE of the original game through \Cref{prop:normal_form_RQE_uniqueness} under monotonicity, we can see that the equilibrium $z$ is a single-valued mapping from $\R^{|\sA_1|\times |\sA_2|}\times \R^{|\sA_2|\times |\sA_1|}$ to $\sZ$. We now further show that the equilibrium point defined as an implicit function of $\mathbf{R}$ is Lipschitz continuous, given that the $F$ is strongly monotone:
\begin{theorem}\label{thm:RQE_Lipschitz_continuity}
    If the modified 4-player game is $\alpha$-strongly monotone, For two different pairs of payoff matrices $\mathbf{R}$ and $\mathbf{R}'$, their corresponding Nash equilibria $z^*=(\pi_1^*,\pi_2^*,p_1^*,p_2^*)$ and $z^\dagger=(\pi_1^\dagger,\pi_2^\dagger,p_1^\dagger,p_2^\dagger)$ satisfies:
    \begin{align*}
        \|z^*-z^\dagger\|_2\leq 2\left(\sqrt{|\sA_1|}+\sqrt{|\sA_2|}\right)\|\mathbf{R}-\mathbf{R}'\|_\infty \Bigg/\alpha,
    \end{align*}
    as a result, the corresponding RQEs in the original game also satisfies:
    \begin{align*}
        \|\pi^*-\pi^\dagger\|_2\leq 2\left(\sqrt{|\sA_1|}+\sqrt{|\sA_2|}\right)\|\mathbf{R}-\mathbf{R}'\|_\infty\Bigg/\alpha.
    \end{align*}
\end{theorem}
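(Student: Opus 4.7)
The plan is to use the same variational-inequality machinery that appeared in the proof of \Cref{prop:normal_form_RQE_uniqueness}, but now compare two equilibria attached to two different payoff pairs and exploit $\alpha$-strong monotonicity in place of strict monotonicity. Concretely, the KKT derivation in \eqref{eq:optimality_equilibrium_point} yields $(z^{*}-z)^{T}F(z^{*};\mathbf{R})\leq 0$ and $(z^{\dagger}-z)^{T}F(z^{\dagger};\mathbf{R}')\leq 0$ for every feasible $z\in\sZ$. Plugging $z=z^{\dagger}$ into the first and $z=z^{*}$ into the second and adding gives
\begin{equation*}
  (z^{*}-z^{\dagger})^{T}\bigl(F(z^{*};\mathbf{R})-F(z^{\dagger};\mathbf{R}')\bigr)\leq 0.
\end{equation*}
I would then insert and subtract $F(z^{\dagger};\mathbf{R})$ to split this into a ``monotonicity'' piece and a ``perturbation'' piece:
\begin{equation*}
  (z^{*}-z^{\dagger})^{T}\bigl(F(z^{*};\mathbf{R})-F(z^{\dagger};\mathbf{R})\bigr)
  \leq (z^{*}-z^{\dagger})^{T}\bigl(F(z^{\dagger};\mathbf{R}')-F(z^{\dagger};\mathbf{R})\bigr).
\end{equation*}
The left-hand side is at least $\alpha\|z^{*}-z^{\dagger}\|_{2}^{2}$ by $\alpha$-strong monotonicity of $F(\cdot;\mathbf{R})$.

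The remaining task is to upper-bound the right-hand side in terms of $\|\mathbf{R}-\mathbf{R}'\|_{\infty}$. Here I would use the explicit formula \eqref{eq:Fz_definition}: the $D_i$, $\nu_i$ terms cancel because they do not depend on $\mathbf{R}$, so only the blocks $-R_{i}p_{i}^{\dagger}$ and $R_{i}^{T}\pi_{i}^{\dagger}$ contribute, giving
\begin{equation*}
  F(z^{\dagger};\mathbf{R}')-F(z^{\dagger};\mathbf{R})
  =\bigl(-(R_{1}'-R_{1})p_{1}^{\dagger},\,-(R_{2}'-R_{2})p_{2}^{\dagger},\,(R_{1}'-R_{1})^{T}\pi_{1}^{\dagger},\,(R_{2}'-R_{2})^{T}\pi_{2}^{\dagger}\bigr).
\end{equation*}
Since the $\pi_{i}^{\dagger}$ and $p_{i}^{\dagger}$ lie on simplices, each coordinate of $(R_{i}'-R_{i})p_{i}^{\dagger}$ and $(R_{i}'-R_{i})^{T}\pi_{i}^{\dagger}$ is bounded by $\|\mathbf{R}-\mathbf{R}'\|_{\infty}$, so the Euclidean norms of these blocks are bounded by $\sqrt{|\sA_{1}|}\,\|\mathbf{R}-\mathbf{R}'\|_{\infty}$ or $\sqrt{|\sA_{2}|}\,\|\mathbf{R}-\mathbf{R}'\|_{\infty}$ depending on the dimension.

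Applying Cauchy--Schwarz block-by-block and then bounding $\sqrt{|\sA_{1}|},\sqrt{|\sA_{2}|}\leq \sqrt{|\sA_{1}|}+\sqrt{|\sA_{2}|}$, I would collect the four blockwise Euclidean differences $\|\pi_{i}^{*}-\pi_{i}^{\dagger}\|_{2},\|p_{i}^{*}-p_{i}^{\dagger}\|_{2}$ and use $\sum_{k=1}^{4}\|\cdot\|_{2}\leq 2\|z^{*}-z^{\dagger}\|_{2}$ (Cauchy--Schwarz against the all-ones vector in $\R^{4}$) to obtain
\begin{equation*}
  \alpha\|z^{*}-z^{\dagger}\|_{2}^{2}\leq 2\bigl(\sqrt{|\sA_{1}|}+\sqrt{|\sA_{2}|}\bigr)\|\mathbf{R}-\mathbf{R}'\|_{\infty}\,\|z^{*}-z^{\dagger}\|_{2}.
\end{equation*}
Dividing by $\|z^{*}-z^{\dagger}\|_{2}$ (trivial if it vanishes) gives the first bound; the second bound for $\pi$ alone follows immediately from $\|\pi^{*}-\pi^{\dagger}\|_{2}\leq \|z^{*}-z^{\dagger}\|_{2}$. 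The only real bookkeeping obstacle is matching the dimension of each payoff-matrix block with the correct factor of $\sqrt{|\sA_{i}|}$ so that the constant comes out as the clean $2(\sqrt{|\sA_{1}|}+\sqrt{|\sA_{2}|})$ stated in the theorem rather than a looser expression.
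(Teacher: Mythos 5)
Your proposal is correct and follows essentially the same argument as the paper: the same variational-inequality characterization of the two equilibria, the same add-and-split decomposition into a strong-monotonicity term and a payoff-perturbation term, and the same Cauchy--Schwarz bound yielding the constant $2(\sqrt{|\sA_1|}+\sqrt{|\sA_2|})$. The only (immaterial) differences are that you insert $F(z^\dagger;\mathbf{R})$ where the paper inserts $F(z^*;\mathbf{R}')$, and you bound the perturbation block-by-block rather than via the norm of the full stacked vector.
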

\begin{proof}
    We apply \eqref{eq:optimality_equilibrium_point} to $z^*$ and $z^\dagger$ and obtain:
    \begin{equation*}
        \begin{aligned}
            (z^*-z^\dagger)^TF(z^*; \mathbf{R})\leq 0;
            (z^\dagger-z^*)^TF(z^\dagger; \mathbf{R}')\leq 0.
        \end{aligned}
    \end{equation*}
    adding these two inequalities up, we have:
    \begin{align*}
        (z^*-z^\dagger)^T\left(F(z^*; \mathbf{R})-F(z^\dagger; \mathbf{R}')\right)\leq 0.
    \end{align*}
    We split the left hand side into two difference terms and get:
    \begin{align*}
        &(z^*-z^\dagger)^T\left(F(z^*; \mathbf{R})-F(z^*; \mathbf{R}')\right)\\
        &+(z^*-z^\dagger)^T\left(F(z^*; \mathbf{R}')-F(z^\dagger; \mathbf{R}')\right)\leq 0.
    \end{align*}
    For the first term, expanding it element-wise we can see that all nonlinear parts cancel out:
    \begin{align*}
            &(z^*-z^\dagger)^T\left(F(z^*; \mathbf{R})-F(z^*; \mathbf{R}')\right)\\
            =&\begin{bmatrix}
                \pi_1^*-\pi_1^\dagger \\ \pi_2^*-\pi_2^\dagger \\ p_1^*-p_1^\dagger \\ p_2^*-p_2^\dagger
            \end{bmatrix}^T \cdot \begin{bmatrix}
                -(R_1-R_1')p_1^*\\
                -(R_2-R_2')p_2^*\\
                (R_1-R_1')^T\pi_1^*\\
                (R_2-R_2')^T\pi_2^*
            \end{bmatrix}\\
            \geq& -\|z^*-z^\dagger\|_2\left\|\begin{bmatrix}
                -(R_1-R_1')p_1^*\\
                -(R_2-R_2')p_2^*\\
                (R_1-R_1')^T\pi_1^*\\
                (R_2-R_2')^T\pi_2^*
            \end{bmatrix}\right\|_2\\
            \geq & -2\left(\sqrt{|\sA_1|}+\sqrt{|\sA_2|}\right)\|z^*-z^\dagger\|_2\|\mathbf{R}-\mathbf{R}'\|_\infty.
        \end{align*}
    For the second term, strong monotonicity yields:
    \begin{align*}
        (z^*-z^\dagger)^T\left(F(z^*; \mathbf{R}')-F(z^\dagger; \mathbf{R}')\right)\geq \alpha\|z^*-z^\dagger\|_2^2.
    \end{align*}
    Therefore, combining the bounds above, we have:
    \begin{align*}
            &\alpha\|z^*-z^\dagger\|_2^2\leq \\
        &\quad 2\left(\sqrt{|\sA_1|}+\sqrt{|\sA_2|}\right)\|z^*-z^\dagger\|_2\|\mathbf{R}-\mathbf{R}'\|_\infty,
    \end{align*}
    canceling out $\|z^*-z^\dagger\|_2$ and rearranging terms we get:
    \begin{align*}
        \|z^*-z^\dagger\|_2\leq2\left(\sqrt{|\sA_1|}+\sqrt{|\sA_2|}\right)\|\mathbf{R}-\mathbf{R}'\|_\infty\Bigg/\alpha,
    \end{align*}
    which completes the proof.
\end{proof}

\Cref{thm:RQE_Lipschitz_continuity} suggests that the RQE $z^*$ is continuous in the payoff matrices $\mathbf{R}$ of the players. Notice that we can use any $L_p$ norm for the payoff difference term $\|\mathbf{R}-\mathbf{R}'\|_p$ on the right hand side that only differs by a multiplicative constant depending only on $|\sA_1|$ and $|\sA_2|$, we choose $L_\infty$ norm to directly fit the proof in \Cref{sec:tractability_discounted_MG}. Also, as we will show later, the strong monotonicity of the game doesn't depend on $\mathbf{R}$, indicating that the continuity of RQE holds for all payoff structures once the risk-aversion and bounded rationality regularizers are fixed.

\subsection{Monotonicity Conditions for Regularizers }

For all results established so far, the only property we used is $\alpha$-strong monotonicity in the modified 4-player game. We now provide conditions on the original game under which strong monotonicity holds. We first prove a lemma:
\begin{lemma}\label{lem:strong_monotonicity_iff_condition}
    A differentiable mapping $F:\sZ\rightarrow \R^N$ is $\alpha$-strongly monotone if and only if for each $z$,
    \begin{align*}
        \left(\nabla F(z)+\nabla F(z)^T\right)/2\succeq \alpha I;
    \end{align*}
    and is strictly monotone if (but not only if) for each $z$,
    \begin{align*}
        \nabla F(z)+\nabla F(z)^T\succ 0.
    \end{align*}
\end{lemma}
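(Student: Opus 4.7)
The plan is to prove the two directions of the strongly monotone equivalence via the integral form of the fundamental theorem of calculus applied along the line segment from $z'$ to $z$, which is the standard way to turn pointwise curvature information into a global inequality.

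For the ``if'' direction, I would parametrize $\phi(t)=z'+t(z-z')$ for $t\in[0,1]$ and write
\begin{align*}
F(z)-F(z')=\int_0^1 \nabla F(\phi(t))(z-z')\,dt.
\end{align*}
Left-multiplying by $(z-z')^T$ and using the identity $u^T A u = u^T (A+A^T)/2\,u$ gives
\begin{align*}
(z-z')^T\bigl(F(z)-F(z')\bigr)=\int_0^1 (z-z')^T\frac{\nabla F(\phi(t))+\nabla F(\phi(t))^T}{2}(z-z')\,dt.
\end{align*}
Under the hypothesis that the symmetric part of $\nabla F$ dominates $\alpha I$, the integrand is at least $\alpha\|z-z'\|_2^2$, and integrating over $[0,1]$ yields the $\alpha$-strong monotonicity inequality. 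For the sufficient condition for strict monotonicity, the same computation works: if the symmetric part of $\nabla F(\phi(t))$ is strictly positive definite at every point then the integrand is strictly positive whenever $z\neq z'$, and hence so is the integral.

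For the ``only if'' direction of strong monotonicity, I would run a local scaling argument at an arbitrary interior point $z$. Taking $v$ an arbitrary unit vector and $t>0$ small enough that $z+tv\in\sZ$, strong monotonicity applied to the pair $(z+tv,z)$ gives
\begin{align*}
t\,v^T\bigl(F(z+tv)-F(z)\bigr)\geq \alpha t^2\|v\|_2^2,
\end{align*}
so dividing by $t^2$ and sending $t\downarrow 0$ produces $v^T\nabla F(z) v\geq \alpha$, which since $v$ was arbitrary is exactly the desired semidefinite bound on the symmetric part. I would briefly mention that the analogous converse for strict monotonicity fails, which justifies the parenthetical ``but not only if'': strict monotonicity only forces the integrated symmetric-part quadratic form to be positive for $z\neq z'$, and this is compatible with the symmetric part vanishing on a measure-zero set, so one cannot pull pointwise strict definiteness out of strict monotonicity.

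The main obstacle is essentially bookkeeping rather than conceptual: I need the domain $\sZ$ to be convex so that the segment $\phi(t)$ stays inside $\sZ$ (which is guaranteed by the standing assumption in the paper), and I need to handle boundary points carefully in the ``only if'' argument by approaching them along admissible directions. Everything else reduces to the symmetric-part identity and a monotone comparison inside the integral.
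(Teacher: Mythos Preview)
Your proposal is correct and follows essentially the same route as the paper: the ``only if'' direction via a difference quotient $(F(z+tv)-F(z))$ and letting $t\downarrow 0$, and the ``if'' direction via integrating along the segment from $z'$ to $z$. Your use of the identity $u^TAu=u^T(A+A^T)u/2$ is slightly cleaner than the paper's detour through an ``alternative form of the definition,'' but the argument is otherwise the same.
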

\begin{proof}
    The condition for strict monotonicity can be found in Proposition 12.3 in \cite{rockafellar2009variational}, we now extend it to strong monotonicity.
    For the `only if' direction, suppose $F$ is $\alpha$-strongly monotone, we have:
    \begin{align*}
        (F(z+tw)-F(z))^T((z+tw)-z)\geq \alpha \|(z+tw)-z\|_2^2
    \end{align*}
    for all $z,w\in \sZ$ and $t>0$. Taking $t\rightarrow 0$, we obtain
    \begin{align*}
        w^T \nabla F(z)^T w \geq \alpha\|w\|_2^2.
    \end{align*}
    Similarly, using an alternative form of the definition that $((z+tw)-z)^T(F(z+tw)-F(z))\geq \alpha \|(z+tw)-z\|_2^2$ we have $w^T \nabla F(z) w\geq \alpha \|w\|_2^2$. Averaging two results and rearranging terms, we obtain:
    \begin{align*}
        w^T\left(\frac{\nabla F(z)+\nabla F(z)^T}{2}-\alpha I\right)w\geq 0, \forall w
    \end{align*}
    this implies $(\nabla F(z)+\nabla F(z)^T)/2\succeq \alpha I$.

    For the `if' direction, we have that $\forall w,z\in \sZ$,
    \begin{align*}
        w^T \nabla F(z) w+w^T \nabla F(z)^T w\geq 2\alpha \|w\|_2^2,
    \end{align*}
    since $w^T \nabla F(z)^T w=w^T \nabla F(z) w$, we deduce that $w^T \nabla F(z) w\geq \alpha \|w\|_2^2$. Consider $z,z'\in \sZ$, let $\varphi(t)=(F(tz'+(1-t)z)-F(z))^T(z'-z)$, we have $\varphi(0)=0$, and
    \begin{align*}
        \varphi'(t)=&(z'-z)^T\nabla F\left(tz'+(1-t)z\right)^T(z'-z)\\
        \geq & \alpha \|z'-z\|_2^2
    \end{align*}
    we deduce that
    \begin{align*}
        \left(F(z')-F(z)\right)^T(z'-z)
        =\varphi(0)+\int_{0}^1 \varphi'(t)\diff t
        \geq \alpha\|z'-z\|_2^2,
    \end{align*}
    which completes the proof.
\end{proof}

To get conditions on the regularizers through \Cref{lem:strong_monotonicity_iff_condition}, we use \eqref{eq:Fz_definition} to obtain:
\begin{align}
    &\frac{\nabla F(z;\mathbf{R})+\nabla F(z;\mathbf{R})^T}{2}\nonumber \\
    =&\begin{bmatrix}
        \epsilon_1\nabla^2\nu_1 & 0 & 0 & \nabla^2_{p\pi} D_2^T/2 \\
        0 & \epsilon_2\nabla^2\nu_2 & \nabla^2_{p\pi} D_1^T/2 & 0 \\
        0 & \nabla^2_{p\pi} D_1/2 & \nabla_p^2 D_1 & 0 \\
        \nabla^2_{p\pi} D_2/2 & 0 & 0 & \nabla_p^2 D_2
    \end{bmatrix}\label{eq:F_Jacobian}
\end{align}
where we have used the shorthand notation $\nu_i=\nu_i(\pi_i)$, $D_i=D_i(p_i,\pi_{-i}), i\in\{1,2\}$. By \Cref{lem:strong_monotonicity_iff_condition}, the modified 4-player game is $\alpha$-strongly monotone if and only if $\frac{\nabla F(z;\mathbf{R})+\nabla F(z;\mathbf{R})^T}{2}\succeq \alpha I$. Notice that the right hand side \eqref{eq:F_Jacobian} can be written as a block-diagonal matrix switching rows/columns 2 and 4, we can further simplify the equivalent condition as:
\begin{equation}\label{eq:simplified_strong_monotonicity_condition}
    \begin{aligned}
        \begin{bmatrix}
        \epsilon_1\nabla^2\nu_1 & \nabla^2_{p\pi} D_2^T/2 \\
        \nabla^2_{p\pi} D_2/2 & \nabla_p^2 D_2
    \end{bmatrix}\succeq \alpha I;\\
    \begin{bmatrix}
        \epsilon_2\nabla^2\nu_2 & \nabla^2_{p\pi} D_1^T/2 \\
        \nabla^2_{p\pi} D_1/2 & \nabla_p^2 D_1
    \end{bmatrix}\succeq \alpha I.
    \end{aligned}
\end{equation}

While this equivalent condition is not straightforward to understand, we summarize some necessary or sufficient conditions in the following proposition:
\begin{proposition}
    A necessary condition for the modified 4-player game to be $\alpha$-strongly monotone is $\epsilon_i\nabla^2\nu_i\succeq \alpha I, \nabla_p^2 D_i\succeq \alpha I, i\in\{1, 2\}$. That is, the quantal response functions are $\alpha$-strongly convex for both players, and the risk-aversion penalty function $D_i$ is $\alpha$-strongly convex in $p_i$ for fixed $\pi_{-i}$, $i\in\{1,2\}$. Moreover, if $D_i$ is jointly convex on both arguments for both players and $\epsilon_i\nabla^2\nu_i-\nabla_\pi^2 D_{-i}/2\succeq \alpha I, \nabla_p^2 D_{-i}/2\succeq \alpha I$, the modified 4-player game is $\alpha$-strongly monotone.
\end{proposition}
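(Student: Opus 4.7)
The plan is to start from the equivalent characterization~\eqref{eq:simplified_strong_monotonicity_condition} already derived in the excerpt: the modified 4-player game is $\alpha$-strongly monotone if and only if the two block matrices
\[
M_1=\begin{bmatrix} \epsilon_1\nabla^2\nu_1 & \nabla^2_{p\pi} D_2^T/2 \\ \nabla^2_{p\pi} D_2/2 & \nabla_p^2 D_2 \end{bmatrix},\quad
M_2=\begin{bmatrix} \epsilon_2\nabla^2\nu_2 & \nabla^2_{p\pi} D_1^T/2 \\ \nabla^2_{p\pi} D_1/2 & \nabla_p^2 D_1 \end{bmatrix}
\]
both satisfy $M_i\succeq \alpha I$. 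Both halves of the proposition then reduce to standard PSD manipulations on these two blocks, which I would handle separately.

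For the necessary direction I would apply the routine test-vector argument: if $M_1\succeq \alpha I$, substituting a test vector of the form $(w,0)$ yields $w^T\epsilon_1\nabla^2\nu_1 w\ge \alpha\|w\|_2^2$, so $\epsilon_1\nabla^2\nu_1\succeq \alpha I$, while substituting $(0,v)$ gives $\nabla_p^2 D_2\succeq \alpha I$. The analogous argument on $M_2$ yields $\epsilon_2\nabla^2\nu_2\succeq \alpha I$ and $\nabla_p^2 D_1\succeq \alpha I$, which together establish the necessary half of the claim.

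For sufficiency, the key observation is that $D_{-i}$, viewed as a function of its two arguments $(p_{-i},\pi_i)$, has full Hessian
\[
H_{-i}=\begin{bmatrix} \nabla_\pi^2 D_{-i} & \nabla^2_{p\pi} D_{-i}^T \\ \nabla^2_{p\pi} D_{-i} & \nabla_p^2 D_{-i}\end{bmatrix},
\]
which is PSD precisely when $D_{-i}$ is jointly convex in its two arguments. I would then write
\[
M_i \;=\; \tfrac{1}{2}H_{-i}+\begin{bmatrix} \epsilon_i\nabla^2\nu_i-\nabla_\pi^2 D_{-i}/2 & 0 \\ 0 & \nabla_p^2 D_{-i}/2 \end{bmatrix},
\]
expressing $M_i$ as the sum of a PSD matrix (by joint convexity) and a block-diagonal matrix whose two diagonal blocks are exactly the $\alpha$-PSD quantities assumed in the hypothesis. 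Since the sum of a PSD matrix and a matrix $\succeq\alpha I$ is still $\succeq\alpha I$, this yields $M_i\succeq\alpha I$ for both $i$, and strong monotonicity follows from~\eqref{eq:simplified_strong_monotonicity_condition}.

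There is no truly hard step; the statement is a linear-algebraic unpacking of the already-established characterization. The only obstacle is bookkeeping, namely aligning the Hessian blocks of $D_{-i}$ with the corresponding entries of $M_i$---in particular, recognizing that ``$\nabla_\pi^2 D_{-i}$'' must refer to the second derivative in the $\pi$-slot of $D_{-i}$ (i.e., with respect to $\pi_i$, not $\pi_{-i}$). Once the block orderings are matched, the decomposition is immediate and the proof fits in a few lines.
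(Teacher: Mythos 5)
Your proposal is correct and follows essentially the same route as the paper: both reduce to the block conditions \eqref{eq:simplified_strong_monotonicity_condition}, extract the diagonal blocks for necessity (you via test vectors $(w,0)$ and $(0,v)$, the paper via a Schur-complement remark---the same elementary fact either way), and prove sufficiency by the identical decomposition of $M_i$ into half the joint Hessian of $D_{-i}$ plus the block-diagonal remainder. No gaps.
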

\begin{proof}
    For the necessary condition part, notice that \eqref{eq:simplified_strong_monotonicity_condition} implies that $\begin{bmatrix}
        \epsilon_i\nabla^2\nu_i-\alpha I & \nabla^2_{p\pi} D_{-i}^T/2 \\
        \nabla^2_{p\pi} D_{-i}/2 & \nabla_p^2 D_{-i}-\alpha I
    \end{bmatrix}\succeq 0$,
    Schur complement suggests $\epsilon_i\nabla^2\nu_i-\alpha I\succeq 0$ and $\nabla_p^2 D_{-i}-\alpha I\succeq 0$.

    For the sufficient condition part, notice that
    \begin{align*}
            &\begin{bmatrix}
        \epsilon_1\nabla^2\nu_1 & \nabla^2_{p\pi} D_2^T/2 \\
        \nabla^2_{p\pi} D_2/2 & \nabla_p^2 D_2
    \end{bmatrix}=\frac{1}{2}\begin{bmatrix}
        \nabla_\pi^2 D_{-i} & \nabla^2_{p\pi} D_{-i}^T \\
        \nabla^2_{p\pi} D_{-i} & \nabla_p^2 D_{-i}
    \end{bmatrix}\\
    &\qquad +\begin{bmatrix}
        \epsilon_i\nabla^2\nu_i-\nabla_\pi^2 D_{-i}/2 & 0 \\
        0 & \nabla_p^2 D_{-i}/2
    \end{bmatrix}
        \end{align*}
    joint convexity suggests the first term to be positive, and therefore if $\begin{bmatrix}
        \epsilon_i\nabla^2\nu_i-\nabla_\pi^2 D_{-i}/2 & 0 \\
        0 & \nabla_p^2 D_{-i}/2
    \end{bmatrix}\succeq \alpha I$,
    $\alpha$-strong monotonicity holds, which simplifies to $\epsilon_i\nabla^2\nu_i-\nabla_\pi^2 D_{-i}/2\succeq \alpha I, \nabla_p^2 D_{-i}/2\succeq \alpha I$.
\end{proof}

Notice that nearly all the penalty functions (as a notion of distance/divergence) used on the simplex satisfies joint convexity in both arguments, including $f$-divergences for arbitrary convex $f$ and $L_p$ norms for $p\geq 1$. Common regularization functions on the simplex are also strongly convex, including negative entropy or log-barrier. This means if $\nu_i$ and $D_i$ functions are selected within this range, we can always select $\epsilon_i$ such that the modified 4-player game is $\alpha$-strongly monotone for some $\alpha>0$.

\section{TRACTABILITY OF RQE IN DISCOUNTED INFINITE-HORIZON MARKOV GAMES}\label{sec:tractability_discounted_MG}
Based on the uniqueness and Lipschitz continuity properties of RQE, we now develop value iteration methods for computing RQE in discounted infinite-horizon Markov games. Throughout this section we assume the choice of $D_i,\epsilon_i$ and $\nu_i$ satisfies the $\alpha$-strong monotonicity conditions \eqref{eq:simplified_strong_monotonicity_condition}. We first define the risk-averse quantal-response Bellman operator as follows:
\begin{definition}
    Given a two-player discounted Markov game $\mathcal{MG}$, risk-aversion penalty functions $D_i(\cdot,\cdot), D_i^{\text{env}}(\cdot,\cdot)$ and regularizers $\epsilon_i\nu_i(\cdot)$ where $i\in\{1,2\}$, the risk-averse quantal-response Bellman operator $\sT$ maps a $Q$ function pair $\mathbf{Q}=(Q_1,Q_2)$ where $Q_i:\sS\times\sA\rightarrow \R$ to another $Q$ function pair $\sT \mathbf{Q}$ in the same function space, defined elementwise as:
    \begin{align*}
        &(\sT \mathbf{Q})_i(s,\mathbf{a})=r_i(s,\mathbf{a})+\\
        &\gamma \inf_{\widetilde{P}\in\Delta_\sS}\left\{-\E_{s'\sim \widetilde{P}}[\texttt{RQE}_i(\mathbf{Q}(s',\cdot))]+D_i^{\text{env}}\left(\widetilde{P},P(\cdot|s,\mathbf{a})\right)\right\}
    \end{align*}
    here we view $\mathbf{Q}(s',\cdot)$ as a pair of payoff matrices in some normal form game with action space $\sA=\sA_1\times \sA_2$.
\end{definition}

Notice that the strong monotonicity implies uniqueness of RQE for any payoff matrix, for arbitrary $Q$ function pairs $\mathbf{Q}$, the term $\texttt{RQE}_i(\mathbf{Q}(s',\cdot))$ is always well-defined. We first show that under certain conditions $\sT$ is a contraction mapping:
\begin{theorem}
    Under the assumption that \eqref{eq:simplified_strong_monotonicity_condition} holds, if $D_i$ are $L$-Lipschitz metrics that satisfy triangle inequality, and $\gamma$ satisfies $\gamma\leq {\alpha}\Bigg/{\left(\alpha+2L\left(\sqrt{|\sA_1|}+\sqrt{|\sA_2|}\right)\right)}$,
    we have that $\sT$ is a contraction mapping.
\end{theorem}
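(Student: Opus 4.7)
I would first reduce the contraction of $\sT$ to a uniform Lipschitz bound on $\mathbf{R} \mapsto \texttt{RQE}_i(\mathbf{R})$ in the max-entry norm, and then establish that bound by combining the equilibrium characterization with \Cref{thm:RQE_Lipschitz_continuity}. Applying the elementary inequality $|\inf_{\widetilde{P}} F - \inf_{\widetilde{P}} G| \leq \sup_{\widetilde{P}} |F - G|$ to the definition of $\sT$ and pushing the supremum through the expectation over $s' \sim \widetilde{P}$ gives
\begin{align*}
\|\sT \mathbf{Q} - \sT \mathbf{Q}'\|_\infty \leq \gamma \sup_{i,s'} |\texttt{RQE}_i(\mathbf{Q}(s',\cdot)) - \texttt{RQE}_i(\mathbf{Q}'(s',\cdot))|.
\end{align*}
Viewing $\mathbf{Q}(s',\cdot)$ as a payoff matrix pair $\mathbf{R}$ of a normal form game, it therefore suffices to prove $|\texttt{RQE}_i(\mathbf{R}) - \texttt{RQE}_i(\mathbf{R}')| \leq K \|\mathbf{R}-\mathbf{R}'\|_\infty$ with $K \leq (\alpha + 2L(\sqrt{|\sA_1|}+\sqrt{|\sA_2|}))/\alpha$.

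For this Lipschitz bound, I would decompose $f_i = \phi + \epsilon_i \nu_i$ where $\phi(\pi_i,\pi_{-i};R_i) := \sup_{p_i}\{-\pi_i^T R_i p_i - D_i(p_i,\pi_{-i})\}$, and let $z^*,z^\dagger$ denote the Nash equilibria at $\mathbf{R}$ and $\mathbf{R}'$ (unique by \Cref{prop:normal_form_RQE_uniqueness}). The key observation is that $\pi_i^*$ minimizes $f_i(\cdot,\pi_{-i}^*;R_i)$ at equilibrium, so $f_i(\pi_i^*,\pi_{-i}^*;R_i) \leq f_i(\pi_i^\dagger,\pi_{-i}^*;R_i)$; plugging this into the $\texttt{RQE}_i$ difference gives
\begin{align*}
\texttt{RQE}_i(\mathbf{R}) - \texttt{RQE}_i(\mathbf{R}') \leq f_i(\pi_i^\dagger,\pi_{-i}^*;R_i) - f_i(\pi_i^\dagger,\pi_{-i}^\dagger;R_i') = \phi(\pi_i^\dagger,\pi_{-i}^*;R_i) - \phi(\pi_i^\dagger,\pi_{-i}^\dagger;R_i'),
\end{align*}
where crucially the $\epsilon_i\nu_i(\pi_i^\dagger)$ terms cancel, sidestepping the absence of any Lipschitz control over $\nu_i$.

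Next, let $p^\star$ attain the supremum defining $\phi(\pi_i^\dagger,\pi_{-i}^*;R_i)$ and use it as a feasible (suboptimal) choice in the second $\phi$, yielding
\begin{align*}
\phi(\pi_i^\dagger,\pi_{-i}^*;R_i) - \phi(\pi_i^\dagger,\pi_{-i}^\dagger;R_i') \leq \pi_i^{\dagger T}(R_i'-R_i)p^\star + D_i(p^\star,\pi_{-i}^\dagger) - D_i(p^\star,\pi_{-i}^*).
\end{align*}
Since $\pi_i^\dagger$ and $p^\star$ are probability vectors, the first term is at most $\|\mathbf{R}-\mathbf{R}'\|_\infty$; the triangle inequality for the metric $D_i$ together with the $L$-Lipschitz hypothesis and $D_i(\pi_{-i}^*,\pi_{-i}^*)=0$ bounds the second by $D_i(\pi_{-i}^*,\pi_{-i}^\dagger) \leq L\|\pi_{-i}^*-\pi_{-i}^\dagger\|_2 \leq L\|z^*-z^\dagger\|_2$. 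Running the symmetric argument in the reverse direction then produces the absolute-value bound $|\texttt{RQE}_i(\mathbf{R}) - \texttt{RQE}_i(\mathbf{R}')| \leq \|\mathbf{R}-\mathbf{R}'\|_\infty + L\|z^*-z^\dagger\|_2$.

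Finally, invoking \Cref{thm:RQE_Lipschitz_continuity} to bound $\|z^*-z^\dagger\|_2 \leq \frac{2(\sqrt{|\sA_1|}+\sqrt{|\sA_2|})}{\alpha}\|\mathbf{R}-\mathbf{R}'\|_\infty$ gives the Lipschitz constant $K = 1 + 2L(\sqrt{|\sA_1|}+\sqrt{|\sA_2|})/\alpha$ for $\texttt{RQE}_i$, and the hypothesis on $\gamma$ is exactly the condition $\gamma K \leq 1$ needed for contraction. I expect the hardest step to be the second one: without the algebraic cancellation of the $\epsilon_i\nu_i$ terms, which hinges on evaluating the upper bound at the common argument $\pi_i^\dagger$ in both summands, the final Lipschitz constant would pick up an uncontrollable dependence on the unbounded derivatives of $\nu_i$ (e.g., negative entropy blows up near the simplex boundary) and the clean condition on $\gamma$ in the theorem would be lost.
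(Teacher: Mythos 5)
Your proposal is correct and follows essentially the same route as the paper's proof: reduce contraction of $\sT$ to a Lipschitz bound on $\mathbf{R}\mapsto\texttt{RQE}_i(\mathbf{R})$, cancel the $\epsilon_i\nu_i$ terms by comparing both equilibrium values at the common argument $\pi_i^\dagger$, control the $D_i$ difference via the triangle inequality and $L$-Lipschitzness, and invoke \Cref{thm:RQE_Lipschitz_continuity} to get the constant $1+2L(\sqrt{|\sA_1|}+\sqrt{|\sA_2|})/\alpha$. The only cosmetic difference is that you phrase the second swap through a freshly chosen maximizer $p^\star$ of $\phi(\pi_i^\dagger,\pi_{-i}^*;R_i)$, whereas the paper uses the equilibrium points $p_i^*,p_i^\dagger$ of the 4-player game together with the adversary's optimality condition; both yield the identical bound and the same condition on $\gamma$.
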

\begin{proof}
    For two $Q$ function pairs $\mathbf{Q}$ and $\mathbf{Q}'$, we have:
    \begin{align}
        &(\sT \mathbf{Q}-\sT \mathbf{Q}')_i(s,\mathbf{a}) \nonumber \\
        \leq & \gamma \sup_{\widetilde{P}\in\Delta_\sS}\E_{s'\sim \widetilde{P}}\left[\texttt{RQE}_i(\mathbf{Q}'(s',\cdot))-\texttt{RQE}_i(\mathbf{Q}(s',\cdot))\right]\label{eq:proof_contraction_1}
    \end{align}
    we have the following bound for the RQE difference term:
    \begin{align*}
        &\texttt{RQE}_i(\mathbf{R})-\texttt{RQE}_i(\mathbf{R}')
        = -(\pi_i^*)^TR_ip^*_i-D_i(p_i^*,\pi_{-i}^*)\\
        &+\epsilon_i\nu_i(\pi_i^*)+(\pi_i^\dagger)^TR_i'p^\dagger_i+D_i(p_i^\dagger,\pi_{-i}^\dagger)-\epsilon_i\nu_i(\pi_i^\dagger)\\
        \leq & -(\pi_i^\dagger)^TR_ip^*_i-D_i(p_i^*,\pi_{-i}^*)+\epsilon_i\nu_i(\pi_i^\dagger)\\
        &+(\pi_i^\dagger)^TR_i'p^\dagger_i+D_i(p_i^\dagger,\pi_{-i}^\dagger)-\epsilon_i\nu_i(\pi_i^\dagger)\\
        =&(\pi_i^\dagger)^T(R_i'p^\dagger_i-R_ip_i^*)-D_i(p_i^*,\pi_{-i}^*)+D_i(p_i^\dagger,\pi_{-i}^\dagger)\\
        \leq &(\pi_i^\dagger)^T(R_i'-R_i)p_i^* -D_i(p_i^*,\pi_{-i}^*)+D_i(p_i^*,\pi_{-i}^\dagger)
    \end{align*}
    where $(\pi^*,p^*)$ and $(\pi^\dagger,p^\dagger)$ are the RQEs w.r.t. $\mathbf{R}$ and $\mathbf{R}'$ respectively, with the inequalities follow from \Cref{def:RQE_normal_form}. Given that $D_i$ satisfies triangle inequality, we have:
    \begin{equation}\begin{aligned}
        &\texttt{RQE}_i(\mathbf{R})-\texttt{RQE}_i(\mathbf{R}')\\
        \leq& (\pi_i^\dagger)^T(R_i'-R_i)p_i^* + D_i(\pi_{-i}^*, \pi_{-i}^\dagger)\\
        \leq& (\pi_i^\dagger)^T(R_i'-R_i)p_i^* +L\|\pi_{-i}^*-\pi_{-i}^\dagger\|_2\\
        \leq& \left(1+\frac{2L\left(\sqrt{|\sA_1|}+\sqrt{|\sA_2|}\right)}{\alpha}\right)\|\mathbf{R}-\mathbf{R}'\|_\infty.
    \end{aligned}\end{equation}
    Now we take $L_\infty$ norm with respect to all possible $(s,\mathbf{a})$ pairs in \eqref{eq:proof_contraction_1} and obtain:
    \begin{equation}\begin{aligned}
        &\|(\sT \mathbf{Q}-\sT \mathbf{Q}')_i\|_\infty\\
        &\leq \gamma \left(1+2L\left(\sqrt{|\sA_1|}+\sqrt{|\sA_2|}\right)\Bigg/{\alpha}\right) \|\mathbf{Q}-\mathbf{Q}'\|_\infty
    \end{aligned}\end{equation}
    since this upper bound holds for both $i\in\{1,2\}$, the left hand side can be simply rewritten as $\|\sT \mathbf{Q}-\sT \mathbf{Q}'\|_\infty$, therefore $\sT$ is a contraction when
    \begin{equation}
        \gamma \left(1+2L\left(\sqrt{|\sA_1|}+\sqrt{|\sA_2|}\right)\Bigg /\alpha\right)\leq 1.
    \end{equation}
\end{proof}

When $\sT$ is a contraction operator, the Banach fixed point theorem suggests there exists a unique fixed point $\mathbf{Q}^*$ such that $\mathbf{Q}^*=\sT \mathbf{Q}^*$. We now prove that this fixed point corresponds to the an RQE of the Markov game:

\begin{proposition}
    Suppose the risk-averse quantal-response Bellman operator $\sT$ has a unique fixed point $\mathbf{Q}^*$, then the policy $\pi^*$ defined state-wise such that $\pi^*(\cdot|s)$ is the unique RQE of the stage game with payoff matrix pair $\mathbf{Q}^*(s,\cdot)$, then $\pi^*(\cdot|s)$ is an RQE of the Markov game.
\end{proposition}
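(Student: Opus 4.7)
This is a Markov-game analogue of the standard dynamic-programming fact that the greedy policy with respect to the optimal $Q$-function is optimal. My plan is to first pin down $\mathbf{Q}^*$ as the on-policy $Q$-function under $\pi^*$, and then compare $\pi^*$ against an arbitrary one-sided deviation $(\pi_i,\pi^*_{-i})$ by reformulating player $i$'s best-response problem as a single-agent robust-regularized MDP whose Bellman optimality operator has $\mathbf{Q}^*_i$ as its unique fixed point.

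For the on-policy identification, observe that for any joint Markov policy $\pi$ the coupled recursion \eqref{eq:V_epsilon_def}--\eqref{eq:Q_epsilon_def} collapses into a single Bellman operator $T^\pi$ acting on $Q$-function pairs. It is a $\gamma$-contraction in $L_\infty$: the only state-$s'$-dependent nonlinearity is $f_i(\pi(\cdot|s');\cdot)$, which by its sup-of-affine-functions-on-the-simplex structure satisfies $|f_i(\pi;R)-f_i(\pi;R')|\leq\|R-R'\|_\infty$, and the outer $\inf_{\widetilde P}\{\E[\cdot]+D_i^{\text{env}}\}$ is non-expansive. Hence $Q^{\epsilon,\pi}$ is the unique fixed point of $T^\pi$. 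Specializing to $\pi=\pi^*$, the stagewise-RQE hypothesis gives $f_i(\pi^*(\cdot|s);\mathbf{Q}^*(s,\cdot))=\texttt{RQE}_i(\mathbf{Q}^*(s,\cdot))$ at every $s$, so $T^{\pi^*}\mathbf{Q}^*$ and $\sT\mathbf{Q}^*$ coincide; since $\sT\mathbf{Q}^*=\mathbf{Q}^*$, uniqueness forces $Q^{\epsilon,\pi^*}=\mathbf{Q}^*$ and therefore $V_i^{\epsilon_i,\pi^*}(s)=-\texttt{RQE}_i(\mathbf{Q}^*(s,\cdot))$.

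For the best-response comparison, I would introduce the Bellman optimality operator for player $i$ against the fixed $\pi^*_{-i}$,
\[
(T_i^\star Q_i)(s,\mathbf{a})=r_i(s,\mathbf{a})+\gamma\inf_{\widetilde P}\Bigl\{\E_{s'\sim\widetilde P}\max_{\pi_i\in\Delta_{|\sA_i|}}\bigl[-f_i\bigl((\pi_i,\pi^*_{-i}(\cdot|s'));Q_i(s',\cdot)\bigr)\bigr]+D_i^{\text{env}}\Bigr\},
\]
which is again a $\gamma$-contraction by the same Lipschitz estimate on $f_i$. By the stagewise-RQE hypothesis, at $Q_i=\mathbf{Q}^*_i$ the inner maximum is attained at $\pi^*_i(\cdot|s')$ and equals $-\texttt{RQE}_i(\mathbf{Q}^*(s',\cdot))$, so $T_i^\star\mathbf{Q}^*_i=(\sT\mathbf{Q}^*)_i=\mathbf{Q}^*_i$. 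For any Markov deviation $\pi_i$, the on-policy operator $T_i^{(\pi_i,\pi^*_{-i})}$ is pointwise dominated by $T_i^\star$ because a single $\pi_i$ cannot exceed the inner max, and both operators are monotone in $Q_i$ since $-f_i(\pi;R)$ is entrywise non-decreasing in $R$ and the robust inf-convolution preserves order. Iterating $T_i^\star$ from $Q^{\epsilon_i,(\pi_i,\pi^*_{-i})}_i$ thus yields $Q^{\epsilon_i,(\pi_i,\pi^*_{-i})}_i\leq\mathbf{Q}^*_i$ pointwise; chaining this with the monotonicity of $-f_i$ in its payoff argument and the stagewise-RQE inequality gives $V_i^{\epsilon_i,(\pi_i,\pi^*_{-i})}(s)\leq -f_i((\pi_i(\cdot|s),\pi^*_{-i}(\cdot|s));\mathbf{Q}^*_i(s,\cdot))\leq -f_i(\pi^*(\cdot|s);\mathbf{Q}^*_i(s,\cdot))=V_i^{\epsilon_i,\pi^*}(s)$, the required Markov RQE condition.

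The main obstacle is the simultaneous book-keeping of three Bellman operators ($T^{\pi^*}$, $\sT$, and $T_i^\star$) together with their order-theoretic properties: one must verify the monotonicity direction of $-f_i$ in $R$ (it is the correct one because $f_i$ involves the inner $\sup_{p_i}(-\pi_i^TR_ip_i)$ with $\pi_i,p_i$ on the simplex), check that this ordering survives the robust inf-convolution with $D_i^{\text{env}}$, and confirm that fixed-point iteration of $T_i^\star$ launched from $Q^{\epsilon_i,(\pi_i,\pi^*_{-i})}_i$ climbs up to $\mathbf{Q}^*_i$ rather than drifting past it. Once these monotonicity and contraction facts are in place, the three operators line up cleanly and the Markov RQE property drops out of the chain of inequalities above.
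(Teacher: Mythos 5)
Your proposal is correct and follows the same overall architecture as the paper's proof: both first identify $\mathbf{Q}^{\epsilon,\pi^*}$ with $\mathbf{Q}^*$ via a fixed-point/uniqueness argument and conclude $V_i^{\epsilon_i,\pi^*}(s)=-\texttt{RQE}_i(\mathbf{Q}^*(s,\cdot))$, and both then invoke a policy-improvement-style comparison against one-sided deviations. The difference is that the paper stops there, literally writing ``we can use a similar argument to policy improvement theorem'' without carrying it out, whereas you actually execute that step: you introduce the best-response operator $T_i^\star$ against the frozen $\pi^*_{-i}$, verify it is a monotone $\gamma$-contraction with fixed point $\mathbf{Q}^*_i$ (using the stage-RQE property to identify the inner max with $-\texttt{RQE}_i$), dominate the on-policy operator of any deviation by $T_i^\star$, and iterate to get $Q_i^{\epsilon_i,(\pi_i,\pi^*_{-i})}\leq\mathbf{Q}^*_i$ and hence $V_i^{\epsilon_i,(\pi_i,\pi^*_{-i})}(s)\leq V_i^{\epsilon_i,\pi^*}(s)$. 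The supporting facts you flag all check out: $\lvert f_i(\pi;R)-f_i(\pi;R')\rvert\leq\lVert R-R'\rVert_\infty$ because $\pi_i^\top(R_i-R_i')p_i$ is a convex combination of entries; $-f_i$ is entrywise non-decreasing in the payoff because the inner object is an infimum over $p_i$ of terms monotone in $R_i$; and the robust $\inf_{\widetilde P}\{\E_{\widetilde P}[\cdot]+D_i^{\text{env}}\}$ is both order-preserving and non-expansive. A minor stylistic divergence: for the first half the paper shows $\mathbf{Q}^{\epsilon,\pi^*}$ is a fixed point of $\sT$ and uses uniqueness of $\sT$'s fixed point, while you show $\mathbf{Q}^*$ is a fixed point of $T^{\pi^*}$ and use uniqueness of that operator's fixed point; these are equivalent, and your version has the advantage of making explicit the contraction of $T^{\pi^*}$ that is needed anyway for $Q^{\epsilon,\pi}$ to be well-defined. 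In short, your write-up is a complete version of the paper's sketched proof rather than a different proof.
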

\begin{proof}
    Since $\pi^*(\cdot|s)$ is the RQE of the stage game, we have that $\forall \pi_i:\sS\rightarrow \Delta_{|\sA_i|}, i\in\{1,2\}$
    \begin{equation}\begin{aligned}
        &\texttt{RQE}_i(\mathbf{Q}^*(s,\cdot))=f_i\left(\pi^*(\cdot|s);\mathbf{Q}^*(s,\cdot)\right)\\
        &\leq f_i\left(\pi_i(s),\pi_{-i}^*(\cdot|s);\mathbf{Q}^*(s,\cdot)\right).
    \end{aligned}\end{equation}
    therefore, combining \eqref{eq:V_epsilon_def} and \eqref{eq:Q_epsilon_def} we have:
    \begin{align*}
        &Q_i^{\epsilon_i,\pi^*}(s,\mathbf{a})=r_i(s,\mathbf{a})+\gamma \inf_{\widetilde{P}\in \Delta_\sS}\Big\{D_i^{\text{env}}\left(\widetilde{P},P(\cdot|s,\mathbf{a})\right)\\
        &-\E_{\widetilde{P}}[f_i(\pi^*(s);\mathbf{Q}^{\epsilon,\pi^*}(s,\cdot))]\Big\}
        =r_i(s,a)+\\
        &\gamma \inf_{\widetilde{P}\in \Delta_\sS}\Big\{D_i^{\text{env}}\left(\widetilde{P},P(\cdot|s,\mathbf{a})\right)-\E_{\widetilde{P}}[\texttt{RQE}_i(\mathbf{Q}(s',\cdot))]\Big\}\\
        =&(\sT \mathbf{Q}^{\epsilon,\pi^*})_i(s,\mathbf{a})
    \end{align*}
    Therefore $\mathbf{Q}^{\epsilon,\pi^*}$ is a fixed point of $\sT$. Since $\sT$ has a unique fixed point, we deduce that $\mathbf{Q}^{\epsilon,\pi^*}=\mathbf{Q}^*$ and further,
    \begin{align*}
        V_i^{\epsilon_i,\pi^*}(s)=-f_i(\pi(s);Q_i^{\epsilon_i,\pi^*}(s,\cdot))=-\texttt{RQE}_i(\mathbf{Q}^*(s,\cdot))
    \end{align*}
    Once this is established, we can use a similar argument to policy improvement theorem to prove that $\pi^*$ is an RQE of the Markov game.
\end{proof}

Now that $\sT$ converges to an RQE of the Markov game, we give a $Q$-learning based iterative algorithm for solving the RQE of the Markov game:
\begin{corollary}
    Given a step size sequence $\alpha_t\geq 0,t=0,1,2,\dots$ satisfying $\sum_{t=0}^\infty \alpha_t=\infty, \sum_{t=0}^\infty \alpha_t^2<\infty$, consider the update rule: $\mathbf{Q}_0=0; \mathbf{Q}_{t+1}=(1-\alpha_t)\mathbf{Q}_t+\alpha_t\sT\mathbf{Q}_t$,
    we have that $\mathbf{Q}_t$ converges to the unique fixed point $\mathbf{Q}^*$ of $\sT$ and the policy induced by the RQE of the stage games at each state $s$ converges to the RQE of the Markov game.
\end{corollary}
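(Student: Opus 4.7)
The plan is to leverage the two main ingredients already established: the contraction property of $\sT$ (from the preceding theorem) and the Lipschitz continuity of the normal-form RQE with respect to payoff matrices (\Cref{thm:RQE_Lipschitz_continuity}). Since $\sT\mathbf{Q}_t$ is evaluated exactly in the update rule, this is not truly a stochastic approximation argument in the Robbins--Monro sense but rather a damped/averaged fixed-point iteration of a contractive operator, and the step-size conditions will only be used through $\sum_t \alpha_t = \infty$ (the square-summability condition would matter only if noise were present).

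The first step is to note that if $\sT$ is a $\kappa$-contraction in the $\|\cdot\|_\infty$ norm with $\kappa = \gamma(1 + 2L(\sqrt{|\sA_1|}+\sqrt{|\sA_2|})/\alpha) < 1$, then for any $\alpha_t \in [0,1]$ the averaged operator $\sT_t := (1-\alpha_t)I + \alpha_t\sT$ is also a contraction with modulus $1 - \alpha_t(1-\kappa)$, and shares the same unique fixed point $\mathbf{Q}^*$ as $\sT$. Applying this repeatedly yields
\begin{equation*}
\|\mathbf{Q}_{t+1} - \mathbf{Q}^*\|_\infty \le \bigl(1 - \alpha_t(1-\kappa)\bigr)\|\mathbf{Q}_t - \mathbf{Q}^*\|_\infty,
\end{equation*}
and hence $\|\mathbf{Q}_T - \mathbf{Q}^*\|_\infty \le \prod_{t=0}^{T-1}(1-\alpha_t(1-\kappa))\,\|\mathbf{Q}_0 - \mathbf{Q}^*\|_\infty$. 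The second step then uses $1-x \le e^{-x}$ to bound this product by $\exp\bigl(-(1-\kappa)\sum_{t=0}^{T-1}\alpha_t\bigr)$, which tends to $0$ as $T\to\infty$ by the divergence of $\sum_t \alpha_t$; the boundedness of $\|\mathbf{Q}_0 - \mathbf{Q}^*\|_\infty$ follows since rewards lie in $[0,1]$ and $\gamma < 1$ bound $\mathbf{Q}^*$ entrywise.

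The third step handles policy convergence. For each state $s$, let $\pi_t(\cdot|s)$ be the unique RQE of the stage game with payoff pair $\mathbf{Q}_t(s,\cdot)$ and $\pi^*(\cdot|s)$ the RQE for $\mathbf{Q}^*(s,\cdot)$; both are well-defined and unique by strong monotonicity (\Cref{prop:normal_form_RQE_uniqueness}). Applying \Cref{thm:RQE_Lipschitz_continuity} state-wise gives
\begin{equation*}
\|\pi_t(\cdot|s) - \pi^*(\cdot|s)\|_2 \le \frac{2(\sqrt{|\sA_1|}+\sqrt{|\sA_2|})}{\alpha}\,\|\mathbf{Q}_t(s,\cdot) - \mathbf{Q}^*(s,\cdot)\|_\infty,
\end{equation*}
which together with $\mathbf{Q}_t\to\mathbf{Q}^*$ in $\|\cdot\|_\infty$ (and finiteness of $\sS$) yields uniform convergence of $\pi_t$ to $\pi^*$. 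The preceding proposition then identifies $\pi^*$ as an RQE of the Markov game.

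I expect no serious obstacle; the argument is essentially a clean Banach-style contraction analysis combined with the previously proved Lipschitz stability. The only mildly delicate point is verifying that the averaged iterate $\sT_t$ inherits contractivity uniformly in $t$ when $\alpha_t$ is allowed to vary, but this follows directly from the pointwise bound above and does not require the square-summability condition. That condition is stated in the corollary only to keep the step-size schedule compatible with a noisy stochastic-approximation extension of the algorithm.
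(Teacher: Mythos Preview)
Your argument is correct and more explicit than what the paper provides. The paper does not give a self-contained proof at all: it simply states that ``the proof of convergence follows from standard results in \cite{borkar2008stochastic}'' and moves on. You instead give a direct Banach-style analysis, exploiting that the damped operator $(1-\alpha_t)I + \alpha_t\sT$ is itself a contraction with modulus $1-\alpha_t(1-\kappa)$, telescoping, and invoking $\sum_t\alpha_t=\infty$ via $1-x\le e^{-x}$. This is exactly the right way to handle the deterministic iteration, and your observation that the square-summability hypothesis is superfluous here (it would only be needed to control martingale noise in a genuine stochastic-approximation setting) is correct and worth stating.

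Two minor points. First, your contraction bound for $\sT_t$ uses $\alpha_t\in[0,1]$, whereas the corollary only assumes $\alpha_t\ge 0$; since $\sum_t\alpha_t^2<\infty$ forces $\alpha_t\to 0$, only finitely many steps can have $\alpha_t>1$, and these contribute at most a bounded multiplicative factor to $\|\mathbf{Q}_T-\mathbf{Q}^*\|_\infty$, so the conclusion is unaffected. Second, the paper's appeal to Borkar has the advantage of extending verbatim to the setting where $\sT\mathbf{Q}_t$ is replaced by an unbiased sample (the actual $Q$-learning regime), which is presumably the intended use; your elementary argument is cleaner for the stated deterministic update but would need to be redone in the noisy case.
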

The proof of convergence follows from standard results in \cite{borkar2008stochastic}. Notice that given $\mathbf{Q}$ we can apply $\sT$ by first computing $\texttt{RQE}_i$ using methods in solving monotone games.

\section{CONCLUSIONS}
In this work we prove uniqueness and Lipschitz continuity of RQE in normal form games, and tractability of RQE in discounted infinite-horizon Markov games under monotonicity assumptions. While we focused on the two-player case, generalization to multi-player case should be straightforward. In the Markov game setting, it is also possible to develop provably convergent policy-gradient based algorithms by similarly defining policy evaluation Bellman operators. It is also interesting to see how our proposed algorithm behaves empirically in multi-agent reinforcement learning, even if the underlying game does not satisfy strong monotonicity.

\addtolength{\textheight}{-12cm}   




\bibliographystyle{ieeetr}
\bibliography{references}

\end{document}